\documentclass{IEEEtran}



\usepackage{amssymb}
\usepackage{amsthm}
\usepackage{cite}
\usepackage[multiple]{footmisc}
\usepackage{mathtools}
\usepackage{enumitem}



\newcommand{\mbb}{\mathbb}

\newcommand{\R}{\mathbb{R}}
\newcommand{\N}{\mathbb{N}}

\newcommand{\BR}{\mbox{\textup{BR}}}




   \def\vx{{\bf x}}

  \def\calL{\mathcal{L}}


\newtheorem{theorem}{Theorem}
\newtheorem{remark}[theorem]{Remark}
\newtheorem{mydef}[theorem]{Definition}
\newtheorem{lemma}[theorem]{Lemma}

\title{On the Exponential Rate of Convergence of Fictitious Play in Potential Games}
\author{Brian Swenson  and  Soummya Kar \thanks{The authors are with the Department of Electrical and Computer Engineering, Carnegie Mellon University, Pittsburgh, PA, USA (brianswe@ece.cmu.edu, soummyak@andrew.cmu.edu). This work was supported in part by the National Science Foundation under grants CCF:1513936 and CCF:1646526.}}




\begin{document}

\maketitle

\begin{abstract}
The paper studies fictitious play (FP) learning dynamics in continuous time. It is shown that in almost every potential game, and for almost every initial condition, the rate of convergence of FP is exponential.
In particular, the paper focuses on studying the behavior of FP in potential games in which all equilibria of the game are regular, as introduced by Harsanyi. Such games are referred to as \emph{regular potential games}.
Recently it has been shown that almost all potential games (in the sense of the Lebesgue measure) are regular. In this paper it is shown that in any regular potential game (and hence, in almost every potential game), FP converges to the set of Nash equilibria at an exponential rate from almost every initial condition.
\end{abstract}

\begin{IEEEkeywords}
  Game theory, Learning, Potential games, Fictitious play, Multi-agent systems, Best-response dynamics
\end{IEEEkeywords}

\section{Introduction}
Multi-agent systems are naturally modeled using the mathematical framework of game theory \cite{fudenberg1991game}.  In recent years there has been a surge of research investigating the use of game-theoretic learning processes as a means of controlling multi-agent systems in a decentralized manner (see \cite{marden2012game} and references therein). Application domains include, but are not limited to, wireless networks \cite{lasaulce2011game,rose2011learning}; the smart grid infrastructure \cite{saad2012game}; distributed traffic routing \cite{marden06,Lambert01,swenson2015CESFP}; electric vehicle charging networks \cite{ma2013decentralized}; mobile sensor networks \cite{zhu2013distributed}; and wind farm management \cite{marden2013model,wu2011demand}.
From an engineering perspective, if game-theoretic learning processes are to be used as decentralized control algorithms in such systems, then it is of paramount importance to understand the rate at which such processes converge to equilibrium.

One of the best-known and most prototypical game theoretic learning processes is known as \emph{fictitious play} (FP) \cite{Brown51,fudenberg1998theory}. In FP, each player adjusts their personal strategy towards a myopic best response given the current strategy of opponents. (See Section \ref{sec_prelims} for a formal definition.) Since Nash equilibria (NE) are defined as the fixed points of the best response mapping, FP dynamics can be regarded as the ``natural'' learning dynamics associated with the NE concept \cite{candogan2013dynamics,swenson2017FP_pot_games}.

Despite its prototypical role, there are relatively few rigorous results characterizing the rate of convergence of FP \cite{harris1998rate,brandt2010rate,daskalakis2014counter}. In particular, there are no general results characterizing the rate of convergence of FP in the important class of multi-agent games known as \emph{potential games} \cite{Mond96}.

In a potential game, there exists some underlying potential function (the structure of which may be unknown to agents) that all players implicitly seek to optimize. Potential games are particularly important in the study of decentralized control algorithms \cite{marden2009cooperative} and have a wide range of applications in the field of multi-agent systems \cite{scutari2006potential,xu2013decision,zhu2013distributed,ding2012collaborative,
nie2006adaptive,chu2012cooperative,srivastava2005using,Lambert01,
garcia2000fictitious,marden-connections,li2013designing}.

The main contribution of this paper is to show that the rate of convergence of FP is almost always exponential in potential games. In particular, we will show that in almost every potential game and for almost every initial condition, the rate of convergence of FP is exponential (see Theorem \ref{thrm_FP_conv_rate1} for a precise statement of our main result).\footnote{We note that in this paper we study the autonomous variant of continuous-time FP, e.g., \cite{leslie2006generalised,benaim2005stochastic}. Convergence rate estimates for the non-autonomous variant of continuous-time FP can be derived from these estimates using a time change \cite{harris1998rate}.}\footnote{When we say that a property holds for ``almost every'' potential game (initial condition), we mean that the set of potential games (initial conditions) where the property fails to hold has Lebesgue measure zero. See Sections \ref{sec_FP_prelims} and \ref{sec_reg_pot_games} for more details.}

%

We note that there are some fundamental challenges that arise when trying to establish convergence rate estimates for FP in potential games. As pointed out by Harris \cite{harris1998rate}, if (the trajectory of) an FP process in a potential game intersects with a mixed-strategy equilibrium, then it may rest there for an indeterminate amount of time before moving elsewhere. Consequently, solutions of FP in potential games may be non-unique and it is impossible to establish general convergence rate estimates in such games.

In order to overcome these difficulties, the proof of our main result relies crucially on several recent advancements in the study of FP and potential games.
The notion of a regular Nash equilibrium was introduced by Harsanyi \cite{harsanyi1973oddness}. Such equilibria are relatively simple to analyze and posses a variety of useful robustness properties. A game is said to be regular if all equilibria in the game are regular.
While not all potential games are regular, it has recently been shown that almost every potential game is regular \cite{swenson2017regular}.
The work \cite{swenson2017FP_pot_games} studied convergence properties of FP in regular potential games. In particular, in \cite{swenson2017FP_pot_games} it was shown that (i) in a regular potential game, FP converges to a pure strategy NE from almost every initial condition,\footnote{In particular, in any regular potential game, the set of initial conditions from which a mixed-strategy equilibrium can be reached by an FP process has Lebesgue measure zero.} and (ii) in any regular potential game, solutions of FP are unique for almost every initial condition.


We will prove Theorem \ref{thrm_FP_conv_rate1} by studying FP in regular potential games and leveraging these two properties.

We remark that conjectures have been made regarding the rate of convergence of FP in potential games, though no rigorous proofs have yet been given. In particular, Harris conjectured that the rate of convergence of FP is exponential in any \emph{weighted} potential game (\hspace{-.001em}\cite{harris1998rate}, Conjecture 25). Our results partially resolve this conjecture. We show that Harris's conjecture holds for almost every initial condition in almost every \emph{exact} potential game.\footnote{Exact potential games are a subset of weighted potential games \cite{Mond96}. Unless otherwise specified, when referring to a ``potential game'' throughout the paper, we mean an exact potential game.} Our results are less broad than Harris's conjecture in the sense that we consider a smaller class of games (exact potential games) and only prove the result for almost every initial condition. However, our result also makes a stronger statement than encountered in Harris's conjecture in the sense that Harris assumed the constants in the convergence rate estimate would need to be path dependent. We show that for almost every initial condition, the constant in the convergence rate estimate is uniquely determined by the initial condition. See Remark \ref{remark_convergence_conjecture} for more details.

The remainder of the paper is devoted to proving Theorem \ref{thrm_FP_conv_rate1}. Section \ref{sec_prelims} sets up notation and formally defines the FP learning process. Section \ref{sec_reg_pot_games} introduces regular potential games. Section \ref{sec_FP_reg_pot_games} reviews relevant results regarding FP in regular potential games. Section \ref{sec_main_result} states our main result (Theorem \ref{thrm_FP_conv_rate1}) and proves the same. Section \ref{sec_conclusions} concludes the paper.

\section{Preliminaries} \label{sec_prelims}
\subsection{Notation}
A game in normal form is represented by the tuple \newline $\Gamma := (N,(Y_i,u_i)_{i=1,\ldots,N})$, where $N\in\{2,3,\ldots\}$ denotes the number of players, $Y_i=\{y_i^1,\ldots,y_i^{K_i}\}$ denotes the set of pure strategies (or actions) available to player $i$, with cardinality $K_i := |Y_i|$, and $u_i:\prod_{j=1}^N Y_j \rightarrow \mbb{R}$ denotes the utility function of player $i$. Denote by $Y:= \prod_{i=1}^N Y_i$ the set of joint pure strategies, and let $K := \prod_{i=1}^N K_i$ denote the number of joint pure strategies.

For a finite set $S$, let $\triangle(S)$ denote the set of probability distributions over $S$. For $i=1,\ldots,N$, let $\Delta_{i}:=\triangle(Y_i)$ denote the set of \emph{mixed-strategies} available to player $i$. Let $\Delta := \prod_{i=1}^N \Delta_i$ denote the set of joint mixed strategies.\footnote{It is implicitly assumed that players' mixed strategies are independent, i.e., players do not coordinate.} Let $\Delta_{-i} := \prod_{j\in\{1,\ldots,N\}\backslash\{i\}} \Delta_j$. When convenient, given a mixed strategy $\sigma=(\sigma_1,\ldots,\sigma_N)\in \Delta$, we use the notation $\sigma_{-i}$ to denote the tuple $(\sigma_j)_{j\not=i}$.

Given a mixed strategy $\sigma\in\Delta$, the expected utility of player $i$ is given by
$$
U_i(\sigma_1,\ldots,\sigma_N) = \sum_{y\in Y} u_i(y)\sigma_1(y_1)\cdots \sigma_N(y_N).
$$

For $\sigma_{-i} \in \Delta_{-i}$, the best response of player $i$ is given by the set-valued function $\BR_i:\Delta_{-i}\rightrightarrows\Delta_i$,
$$
\BR_i(\sigma_{-i}):= \arg\max_{\sigma_i' \in \Delta_i} U_i(\sigma_i',\sigma_{-i}),
$$
and for $\sigma\in \Delta$ the joint best response is given by the set valued function $\BR:\Delta\rightrightarrows\Delta$
$$
\BR(\sigma) := \BR_{1}(\sigma_{-1})\times\cdots\times \BR_{N}(\sigma_{-N}).
$$

A strategy $\sigma\in \Delta$ is said to be a Nash equilibrium (NE) if $\sigma \in \BR(\sigma)$. For convenience, we sometimes refer to a Nash equilibrium simply as an equilibrium.

We say that $\Gamma$ is a potential game \cite{Mond96} if there exists a function $u:Y\rightarrow \R$ such that $u_i(y_i',y_{-i}) - u_i(y_i'',y_{-i}) = u(y_i',y_{-i}) - u(y_i'',y_{-i})$ for all $y_{-i} \in Y_{-i}$ and $y_i',y_i'' \in Y_i$, for all $i=1,\ldots,N$.

Let $U:\Delta\rightarrow \R$ be the multilinear extension of $u$ defined by
\begin{equation} \label{def_potential_fun1}
U(\sigma_1,\ldots,\sigma_N) = \sum_{y\in Y} u(y)\sigma_1(y_1)\cdots\sigma(y_N).
\end{equation}
The function $U$ may be seen as giving the expected value of $u$ under the mixed strategy $\sigma$.


Using the definitions of $U_i$ and $U$, it is straightforward to verify that
$$
\BR_i(\sigma_{-i}) := \arg\max_{\sigma_i\in\Delta_i} U_i(\sigma_i,\sigma_{-i}) = \arg\max_{\sigma_i\in\Delta_i} U(\sigma_i,\sigma_{-i}).
$$
Thus, in order to compute the best response set, we only require knowledge of the potential function $U$, not necessarily the individual utility functions $(U_i)_{i=1,\ldots,N}$.


By way of notation, given a pure strategy $y_i \in Y_i$ and a mixed strategy $\sigma_{-i} \in \Delta_{-i}$, we will write $U(y_i,\sigma_{-i})$ to indicate the value of $U$ when player $i$ uses a mixed strategy placing all weight on the $y_i$ and the remaining players use the strategy $\sigma_{-i}\in \Delta_{-i}$.

Given a $\sigma_i \in \Delta_i$, let $\sigma_i^k$ denote the value of the $k$-th entry in $\sigma_i$, so that $\sigma_i = (\sigma_i^k)_{k=1}^{K_i}$.
Since the potential function is linear in each $\sigma_i$, if we fix any $i=1,\ldots,N$ we may express it as
\begin{equation} \label{eq_potential_expanded_form}
U(\sigma) = \sum_{k=1}^{K_i} \sigma_i^k U(y_i^k,\sigma_{-i}).
\end{equation}


Throughout the paper we will use the following nomenclature to describe equilibrium strategies.
\begin{mydef}
(i) If an equilibrium strategy $\sigma\in \Delta$ places all its mass on a single action tuple $y=(y_1,\ldots,y_N)\in Y$, then we refer to $\sigma$ as \emph{a pure-strategy equilibrium}.\\
(ii) If an equilibrium strategy $\sigma\in \Delta$ is not a pure-strategy equilibrium, then we say it is a mixed-strategy equilibrium.
\end{mydef}

The following definition gives a refinement of the Nash equilibrium concept applicable to pure-strategy equilibria.
\begin{mydef}
If $\sigma\in \Delta$ is a pure-strategy equilibrium placing mass on the action tuple $y =(y_1,\ldots,y_N) \in Y$, and there holds
$$
u(y_i,y_{-i}) > u(y_i',y_{-i})
$$
for all $y_i'\in Y_i$, $y_i'\not= y_i$, $i=1,\ldots N$, then we say that $\sigma$ is a \emph{strict} pure-strategy Nash equilibrium.
\end{mydef}

If an equilibrium strategy is not pure, then it cannot be strict.
The following definition gives a relaxation of the notion of strictness that applies more generally to mixed-strategy equilibria.
\begin{mydef}
We say that an equilibrium strategy $\sigma\in \Delta$ is \emph{quasi-strict} if for every $i=1,\ldots,N$, every pure strategy $y_i\in BR_i(\sigma_{-i})$ is also in the support of $\sigma_i$.
\end{mydef}
We note that if $\sigma$ is a pure-strategy equilibrium, then $\sigma$ is strict if and only if it is quasi-strict.

In order to study learning dynamics without being (directly) encumbered by the hyperplane constraint inherent in $\Delta_i$ we define
\begin{align*}
X_i := \{x_i\in \R^{K_i-1}:~ 0\leq x_i^k\leq 1 \mbox{ for } k=1,\ldots,K_i-1,\\ \mbox{ and } \sum_{k=1}^{K_i-1}x_i^k \leq 1\},
\end{align*}
where we use the convention that $x_i^k$ denotes the $k$-th entry in $x_i$ so that $x_i = (x_i^k)_{k=1}^{K_i-1}$.

Given $x_i\in X_i$ define the bijective mapping $T_i:X_i\rightarrow \Delta_i$ as
\begin{equation}\label{def_T}
T_i(x_i) = \sigma_i
\end{equation}
for the unique $\sigma_i\in \Delta_i$ such that $\sigma_i^{k} = x_i^{k-1}$ for $k=2,\ldots,K_i$ and $\sigma_i^1 = 1-\sum_{k=1}^{K_i-1} x_i^k$. For $k=1,\ldots,K_i$ let $T_i^k$ be the $k$-th component map of $T_i$ so that $T_i = (T_i^k)_{i=1}^{K_i}$.



Let $X := X_1\times\cdots\times X_N$ and let $T:X\rightarrow \Delta$ be the bijection given by $T = T_1\times\cdots\times T_N$. In an abuse of terminology, we sometimes refer to $X$ as the \emph{mixed-strategy space} of $\Gamma$. When convenient, given an $x\in X$ we use the notation $x_{-i}$ to denote the tuple $(x_j)_{j\not= i}$. Letting $X_{-i} := \prod_{j\not = i} X_j$, we define $T_{-i}:X_{-i} \rightarrow \Delta_{-i}$ as $T_{-i} := (T_j)_{j\not= i}$. Let
\begin{equation}\label{def_kappa}
\kappa := \sum_{i=1}^N (|Y_i|-1)
\end{equation}
denote the dimension of $X$, and note that $\kappa\not= K$, where $K$, defined earlier, is the cardinality of the joint pure strategy set $Y$.

Throughout the paper we often find it convenient to work in $X$ rather than $\Delta$.
In order to keep the notation as simple as possible we overload the definitions given above, modifying all definitions given for strategies $\sigma\in \Delta$ \emph{modus mutandis} using the bijective relationships $T:X\to\Delta$, $T_i:X_i\to\Delta_i$ and $T_{-i}\to\Delta_{-i}$.
In particular, we let $\BR_i:X_{-i} \rightrightarrows X_i$ be defined by
\begin{align*}
\BR_i(x_{-i}) :=  \{ & x_i\in X_i:~ \BR_i(\sigma_{-i}) = \sigma_i,~ \sigma_i\in \Delta_i,\\
& \sigma_{-i} \in \Delta_{-i},~ \sigma_i = T_i(x_i),~ \sigma_{-i} = T_{-i}(x_{-i})\}.
\end{align*}
Similarly, given an $x\in X$ we write $U(x)$ instead of $U(T(x))$, and we
say a strategy $x\in X$ satisfies a given property (e.g., $x$ is a pure-strategy, strict, or quasi-strict equilibrium) if the corresponding strategy $\sigma=T(x)\in \Delta$ satisfies the property.

Given a pure strategy $y_i\in Y_i$, we will write $U(y_i,x_{-i})$ to indicate the value of $U$ when player $i$ uses a mixed strategy placing all weight on the $y_i$ and the remaining players use the strategy $x_{-i}\in X_{-i}$.
Similarly, we will say $y_i^k \in \BR_i(x_{-i})$ if there exists an $x_i\in \BR_i(x_{-i})$ such that $T_i(x_i)$ places weight one on $y_i^k$.

Applying the definition of $T_i$ to \eqref{eq_potential_expanded_form}, we see that $U(x)$ may also be expressed as
\begin{equation} \label{eq_potential_expanded_form2}
U(x) = \sum_{k=1}^{K_i-1}x_i^k U(y_i^{k+1},x_{-i}) + \left(1-\sum_{k=1}^{K_i-1}x_i^k\right)U(y_i^{1},x_{-i}).
\end{equation}
for any $i=1,\ldots,N$.


Other notation as used throughout the paper is as follows:
\begin{itemize}
\item $\N:=\{1,2,\ldots\}$.

\item $\calL^n$, $n\in \{1,2,\ldots\}$ refers to the $n$-dimensional Lebesgue measure.
\end{itemize}

\subsection{Fictitious Play} \label{sec_FP_prelims}
A fictitious play process is formally defined as follows.
\begin{mydef}
An absolutely-continuous mapping $\vx:\R\rightarrow X$ is said to be a \emph{fictitious play process} with initial condition $x_0\in X$ if $\vx(0) = x_0$ and
\begin{equation}\label{def_FP_autonomous}
\dot \vx \in \BR(\vx) - \vx
\end{equation}
holds for almost every $t\in \R$.
\end{mydef}

Given a game $\Gamma$, we say that FP satisfies a property for \emph{almost every} initial condition if the set of initial conditions from which the property fails to hold has $\calL^\kappa$-measure zero.

\section{Regular Potential Games} \label{sec_reg_pot_games}
The notion of a ``regular'' Nash equilibrium was introduced by Harsanyi \cite{harsanyi1973oddness}. Regular equilibria have been studied extensively in the literature (see \cite{van1991stability} and references therein), and have been shown to possess a wide range of desirable stability and robustness properties. Among other things, regular equilibria are quasi-strict \cite{harsanyi1973oddness,van1991stability};  perfect \cite{selten1975reexamination}; proper \cite{myerson1978refinements}; strongly stable \cite{kojima1985strongly}; essential \cite{wen1962essential}; and isolated \cite{van1991stability}.\footnote{For a simple flowchart demonstrating the interrelationships between these concepts, the reader may refer to the survey diagrams found in the appendix of \cite{van1991stability}.}

In addition to their robustness properties, regular equilibria possess an inherently simple analytic structure, which can be quite useful in the study of game-theoretic learning algorithms.
For example, the simple structural properties of regular equilibria were used in a critical way to facilitate the study of FP learning dynamics in \cite{swenson2017FP_pot_games}.

A game is said to be regular if all equilibria in the game are regular.  In this paper we will focus our study on potential games that are regular.

The set of potential games is isomorphic to $\R^{K_p}$, where
$K_p := (\sum_{i=1}^N \prod_{j\not=i} K_j) + N + K-1$, \cite{swenson2017regular}.
We say that almost every potential games possesses a certain property if the set of games where the property fails to hold has $\calL^{K_p}$-measure zero.
Regular potential games were studied in \cite{swenson2017regular}, where the following result was proved.
\begin{theorem}[ \hspace{-.01em}\cite{swenson2017regular}, Theorem 1]
Almost every potential game is regular.
\end{theorem}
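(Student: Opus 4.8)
The plan is to show that the complement of the set of regular potential games is a finite union of $\calL^{K_p}$-null sets. Since there are only finitely many \emph{support profiles} $C=(C_1,\ldots,C_N)$ with $\emptyset\neq C_i\subseteq Y_i$, and every non-regular potential game has an equilibrium whose carrier is one such $C$ violating one of the two defining conditions of regularity, it suffices to fix $C$ and show that the set of potential games possessing a non-regular equilibrium with carrier exactly $C$ is null. First I would recall the analytic description of regularity specialized to potential games: because $\BR_i$ depends only on the potential $U$, a strategy with carrier $C$ is a Nash equilibrium iff it is a KKT point of $\max_{\sigma\in\Delta}U(\sigma)$. Writing $z$ for the vector of carrier probabilities with the normalizations $\sum_{k\in C_i}\sigma_i^k=1$ substituted in (so $z$ ranges over the relative interior of the face $\prod_i\triangle(C_i)$, of dimension $m:=\sum_i(|C_i|-1)$), the equilibrium conditions are the $m$ equalities $U(y_i^k,\sigma_{-i})-U(y_i^{k_i^\ast},\sigma_{-i})=0$ for $k\in C_i\setminus\{k_i^\ast\}$ (with a fixed reference action $k_i^\ast\in C_i$ per player), together with the strict inequalities $U(y_i^k,\sigma_{-i})<U(y_i^{k_i^\ast},\sigma_{-i})$ for $k\notin C_i$. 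Collecting the equalities into a map $z\mapsto\Phi_C(g,z)\in\R^m$, where $g\in\R^{K_p}$ denotes the game, an equilibrium with carrier exactly $C$ is regular precisely when the strict inequalities hold \emph{and} $D_z\Phi_C(g,z)$ is nonsingular, i.e. the critical point of $U$ restricted to the face is nondegenerate.

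Next I would run a parametric transversality (Sard-type) argument. From the multilinear expansion \eqref{eq_potential_expanded_form}, for each fixed $z$ in the open face the map $g\mapsto\Phi_C(g,z)$ is \emph{linear}, and the key point (the submersion claim below) is that it is \emph{surjective} onto $\R^m$; hence $\Phi_C$ is transverse to $0$ on $(\text{games})\times(\text{open face})$. By the parametric transversality theorem, for $\calL^{K_p}$-almost every $g$ the section $z\mapsto\Phi_C(g,z)$ is transverse to $0$; since its domain and codomain both have dimension $m$, its zero set is then $0$-dimensional and $D_z\Phi_C(g,z)$ is nonsingular at every zero. Thus, outside a null set of games, the nondegeneracy condition holds at every carrier-$C$ equilibrium. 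For the quasi-strictness condition, fix in addition an off-carrier action $y_i^k$ and append the single equation $U(y_i^k,\sigma_{-i})-U(y_i^{k_i^\ast},\sigma_{-i})=0$ to $\Phi_C$; by the same submersion property this enlarged map is transverse to $0$, so for almost every $g$ its zero set has dimension $m-(m+1)<0$ and is therefore empty, meaning no carrier-$C$ equilibrium has an off-carrier best response. Taking the union over the finitely many choices of $C$, reference actions, and off-carrier actions shows the non-regular potential games form an $\calL^{K_p}$-null set, which is the claim.

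The main obstacle is the surjectivity (submersion) claim: for every $z$ in the open face, the linear map $g\mapsto\Phi_C(g,z)$ hits all of $\R^m$. Its $(i,k)$-component equals $\sum_{y_{-i}}\bigl(u(y_i^k,y_{-i})-u(y_i^{k_i^\ast},y_{-i})\bigr)\prod_{j\neq i}\sigma_j(y_j)$, which is linear in the potential coefficients $u(\cdot)$, and one must show that the perturbations of $u$ available \emph{within the class of potential games} — rather than within the full normal-form space handled in Harsanyi's original argument — still realize arbitrary perturbations of each player's carrier payoff-differences. I would establish this by exhibiting, for each target coordinate, an explicit perturbation of $u$ supported on a suitable slice $\{y_i^k\}\times Y_{-i}$, using that each $\sigma_j$ has full support on $C_j$ so the products $\prod_{j\neq i}\sigma_j(y_j)$ are nonzero, and then triangularizing over players and carrier indices. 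This linear-algebra step is where the potential-game structure genuinely enters and is the part most deserving of care; equivalently, one could instead exhibit a single regular potential game possessing a carrier-$C$ equilibrium and invoke that the non-regular locus is a lower-dimensional semialgebraic subset of $\R^{K_p}$, but producing such a witness for every $C$ amounts to essentially the same computation.
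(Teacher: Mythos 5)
This theorem is not proved in the paper you were given: it is imported verbatim from \cite{swenson2017regular} (Theorem 1 there), so there is no in-paper argument to compare against. That said, your proposal is a sound reconstruction of the standard genericity argument used for this result, and it matches the spirit of the cited reference: decompose by support profiles $C$, use the fact that for potential games both quasi-strictness and Harsanyi's Jacobian condition depend only on the potential $U$ (regularity reduces to quasi-strictness plus nonsingularity of $D_z\Phi_C$, i.e.\ nondegeneracy of the critical point of $U$ on the face), and then kill the degenerate and non-quasi-strict loci by parametric transversality over the potential coefficients, finishing with a finite union of $\calL^{K_p}$-null sets. The crux you correctly single out -- surjectivity of $g\mapsto\Phi_C(g,z)$ using only potential-game perturbations -- does go through, and in fact more cleanly than your ``triangularize'' plan suggests: adding a constant $\varepsilon$ to $u$ on the slice $\{y_i^k\}\times Y_{-i}$ shifts $U(y_i^k,\sigma_{-i})$ by $\varepsilon$, leaves $U(y_i^{k'},\sigma_{-i})$ unchanged for $k'\neq k$, and shifts every $U(y_j^l,\sigma_{-j})$, $j\neq i$, by the same amount $\varepsilon\,\sigma_i(y_i^k)$, so all other players' payoff differences are untouched; hence these slice directions hit the standard basis of $\R^m$ (and of $\R^{m+1}$ in the quasi-strictness step, where $\sigma_i(y_i^k)=0$ for the off-carrier action) and the map is a submersion in $g$ alone. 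Since any potential $u$ is realized within the (linear) parametrization of potential games, this transfers to the parameter space $\R^{K_p}$.

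One caveat: your closing aside -- exhibiting a single regular potential game with a carrier-$C$ equilibrium and ``invoking'' that the non-regular locus is lower-dimensional semialgebraic -- is not sound as stated. A single witness only shows the bad locus is a proper subset; a proper semialgebraic subset can still have nonempty interior and positive measure, so you would still need an argument (essentially the transversality computation you already gave, or a proof that the bad locus has empty interior) to conclude it is null. Since this is only offered as an alternative and your main route does not rely on it, the proposal as a whole stands.
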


The following lemma gives a useful property of pure-strategy equilibria in regular potential games that will be useful in the proof of our main result.
\begin{lemma}\label{lemma_strictness}
Let $x^*\in X$ be a pure-strategy equilibrium of a regular potential game. Then for all $x\in X$ in a neighborhood of $x^*$ there holds
\begin{equation}\label{lemma1_eq0}
\BR(x) = \{x^*\};
\end{equation}
that is, the pure-strategy equilibrium $x^*$ is the unique best response to every $x$ in a neighborhood of $x^*$.
\end{lemma}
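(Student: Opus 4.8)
The plan is to reduce the claim to the definition of strictness and then invoke continuity of the potential function.

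First I would observe that since the game is regular, every equilibrium—in particular $x^*$—is regular, hence quasi-strict, as recalled in Section~\ref{sec_reg_pot_games}. Because $x^*$ is moreover a pure-strategy equilibrium, and a pure-strategy equilibrium is strict if and only if it is quasi-strict, it follows that $x^*$ is a \emph{strict} pure-strategy Nash equilibrium. Let $y^*=(y_1^*,\ldots,y_N^*)\in Y$ be the action tuple on which $x^*$ places mass, and for each $i$ let $k_i^*$ be the index with $y_i^{k_i^*}=y_i^*$. Strictness then says $u(y_i^*,y_{-i}^*) > u(y_i^k,y_{-i}^*)$ for every $i=1,\ldots,N$ and every $k\neq k_i^*$.

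Next I would fix $i$ and use multilinearity of $U$ (cf. \eqref{eq_potential_expanded_form2}) to note that for any $x_{-i}\in X_{-i}$ the set $\BR_i(x_{-i})$ is the image under $T_i$ of the convex hull of those pure strategies $y_i^k$ attaining $\max_k U(y_i^k,x_{-i})$; in particular $\BR_i(x_{-i})=\{x_i^*\}$ as soon as $U(y_i^{k_i^*},x_{-i}) > U(y_i^k,x_{-i})$ for all $k\neq k_i^*$. Each map $x_{-i}\mapsto U(y_i^k,x_{-i})$ is a polynomial, hence continuous, and at the point $x_{-i}^*$ (which corresponds to the pure tuple $y_{-i}^*$) we have $U(y_i^{k_i^*},x_{-i}^*)=u(y_i^*,y_{-i}^*)$ and $U(y_i^k,x_{-i}^*)=u(y_i^k,y_{-i}^*)$. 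The strict inequalities above therefore hold at $x_{-i}^*$, and by continuity they persist on some neighborhood $V_i$ of $x_{-i}^*$ in $X_{-i}$ (a finite intersection of open sets), so that $\BR_i(x_{-i})=\{x_i^*\}$ for every $x_{-i}\in V_i$.

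Finally I would pick a neighborhood $W$ of $x^*$ in $X$ small enough that its projection onto $X_{-i}$ lies in $V_i$ for each of the finitely many $i$; then for all $x\in W$, $\BR(x)=\BR_1(x_{-1})\times\cdots\times\BR_N(x_{-N})=\{x_1^*\}\times\cdots\times\{x_N^*\}=\{x^*\}$, which is \eqref{lemma1_eq0}. I do not anticipate a serious obstacle here: the only points requiring care are the correct use of the implication chain \emph{regular $\Rightarrow$ quasi-strict $\Rightarrow$ (for pure equilibria) strict}, and keeping the bijections $T_i$ straight when translating the best-response characterization between $\Delta_i$ and $X_i$.
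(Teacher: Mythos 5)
Your proposal is correct and follows essentially the same route as the paper: regular $\Rightarrow$ quasi-strict $\Rightarrow$ strict for the pure equilibrium $x^*$, then continuity of the maps $x_{-i}\mapsto U(y_i^k,x_{-i})$ propagates the strict inequalities $U(y_i^{k_i^*},x_{-i})>U(y_i^k,x_{-i})$ to a neighborhood, forcing the unique best response $\{x^*\}$. The only cosmetic difference is that the paper phrases the continuity step through the sign of the partial derivatives $\partial U/\partial x_i^k$ (which are exactly these differences of $U$ at pure strategies), while you argue directly with the payoff differences and spell out the convex-hull characterization of $\BR_i$; both are the same argument in substance.
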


\begin{proof}
Without loss of generality, assume that the strategy set $Y_i := \{y_i^1,\ldots,y_i^{K_i}\}$ of each player $i=1,\ldots,N$ is reordered so that $y_i^1 \in \BR_i(x_{-i}^*)$.

In \cite{van1991stability} it is shown that every regular equilibrium is quasi-strict. In particular, this implies that the regular pure-strategy equilibrium $x^*$ is strict, i.e.,
\begin{equation}\label{lemma1_eq3}
U(y_i^1,x_{-i}^*) > U(y_i^{k+1},x_{-i}^*),
\end{equation}
for all $i=1,\ldots,N$, $k=1,\ldots,K_i$.
Differentiating \eqref{eq_potential_expanded_form2} we see that
$$
\frac{\partial U(x^*)}{\partial x_i^k} = U(y_i^{k+1},x_{-i}^*) - U(y_i^{1},x_{-i}^*) < 0.
$$
for all $i=1,\ldots,N$, $k=1,\ldots,K_i$, where the inequality follows from \eqref{lemma1_eq3}.
Since the gradient of $U$ is continuous, we see that
$$
\frac{\partial U(x)}{\partial x_i^k} = U(y_i^{k+1},x_{-i}) - U(y_i^{1},x_{-i}) < 0,
$$
for all $i=1,\ldots,N$, $k=1,\ldots,K_i$, and all $x\in X$ in a neighborhood of $x^*$.
In particular, this gives $U(y_i^{1},x_{-i}) > U(y_i^{k+1},x_{-i})$ for all $i=1,\ldots,N$, $k=1,\ldots,K_i$, and all $x\in X$ in a neighborhood of $x^*$, which is the desired result.


\end{proof}

\section{Fictitious Play in Regular Potential Games} \label{sec_FP_reg_pot_games}
It is known that FP can converge to mixed (but not pure) equilibria in potential games \cite{Mond01}. This can be problematic for several reasons. For one thing, mixed-strategy equilibria necessarily occur at saddle points of the potential function, which means that they do not maximize the potential function and tend to be inherently unstable under learning dynamics. At a more fundamental level, mixed equilibria are problematic since an FP process may reach such an equilibrium in finite time.\footnote{In fact, in \cite{swenson2017FP_pot_games} it was shown that if an FP process converges to a regular mixed equilibrium of a potential game, then the FP process necessarily converges to the equilibrium in finite time. See \cite{swenson2017FP_pot_games}, Section 5.2.} In such an event, the FP process may rest at the equilibrium for an indeterminate amount of time before moving elsewhere. This results in non-uniqueness of solutions and makes it impossible to establish general convergence rate estimates for FP in potential games.

In \cite{swenson2017FP_pot_games} it was shown that these issues can be sidestepped by focusing on the class of regular potential games. In particular, we have the following results for FP in potential games.

\begin{theorem} \label{thrm_main_result_FP_pure_strategy}
Let $\Gamma$ be a regular potential game. Then,\\
(i) For almost every initial condition,
FP converges to a pure-strategy Nash equilibrium. In particular, FP can only reach mixed-strategy (non-pure) equilibria from a set of initial conditions with $\calL^\kappa$-measure zero.\\
(ii) For almost every initial condition $x_0\in X$, there is a unique FP process with $\vx(0) = x_0$.
\end{theorem}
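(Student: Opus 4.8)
The plan is to derive Theorem~\ref{thrm_main_result_FP_pure_strategy} from three ingredients: the Lyapunov structure of FP in potential games, the fact that regular equilibria are isolated, and a local analysis of the FP inclusion near a regular mixed equilibrium. \emph{Step 1 (every FP process converges to a single Nash equilibrium).} In a potential game, $U$ is a strict Lyapunov function for FP: along any FP process $\vx$, writing $\dot\vx(t)=v(t)-\vx(t)$ with $v(t)\in\BR(\vx(t))$ and using \eqref{eq_potential_expanded_form2} together with the multilinearity of $U$, one obtains for a.e.\ $t$ that $\frac{d}{dt}U(\vx(t))=\nabla U(\vx(t))\cdot(v(t)-\vx(t))=\sum_{i=1}^{N}\big(\max_{\sigma_i\in\Delta_i}U(\sigma_i,\vx_{-i}(t))-U(\vx(t))\big)\ge 0$, where the right-hand side vanishes precisely when $\vx(t)$ is a Nash equilibrium. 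A LaSalle-type argument on the compact set $X$ then forces the $\omega$-limit set of $\vx$ (nonempty, compact, connected since $\BR(\vx)-\vx$ is upper semicontinuous with convex compact values) into the set of Nash equilibria; since $\Gamma$ is regular, every equilibrium is regular, hence isolated \cite{harsanyi1973oddness,van1991stability}, and by compactness there are only finitely many, so a connected $\omega$-limit set inside this finite set is a single point and $\vx$ converges to some Nash equilibrium.

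\emph{Step 2 (part (i): mixed equilibria are reached only from a null set).} Enumerate the mixed-strategy equilibria as $z^1,\ldots,z^m$ (finitely many, by Step 1). By Step 1 it suffices to show that for each $j$ the set $B_j$ of initial conditions $x_0\in X$ from which some FP process converges to $z^j$ has $\calL^\kappa$-measure zero, since then the complement of $\bigcup_{j=1}^{m}B_j$ is exactly the set of initial conditions from which FP converges to a pure-strategy equilibrium. To bound $B_j$ I would study the FP inclusion locally at $z^j$. A regular mixed equilibrium of a potential game is a nondegenerate saddle of $U$; in particular $U$ is not locally maximized at $z^j$, so, since $U$ is strictly increasing along nonstationary FP orbits, no orbit can converge to $z^j$ from the region $\{U>U(z^j)\}$, and every convergent orbit is trapped in a local ``stable set'' contained in $\{U\le U(z^j)\}$. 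The technical heart is to show this stable set lies in a finite union of $C^1$ submanifolds of dimension strictly below $\kappa$, hence is $\calL^\kappa$-null, and then to pull it back along the dynamics to conclude $B_j$ is null; here the finite-time convergence phenomenon recorded in \cite{swenson2017FP_pot_games} (an FP orbit converging to a regular mixed equilibrium does so in finite time) lets one replace an infinite-horizon basin computation by finitely many flow-box pullbacks along a locally Lipschitz best-response selection, each of which preserves $\calL^\kappa$-null sets.

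\emph{Step 3 (part (ii): uniqueness for a.e.\ initial condition).} Genuine branching of FP solutions requires reaching a point where $\BR$ is multivalued \emph{and} where distinct selections generate distinct orbits. Away from the finite set $\{z^1,\ldots,z^m\}$, the locus where $\BR$ is multivalued is a finite union of lower-dimensional ``indifference'' sets crossed transversally, and one checks that all admissible selections push the orbit onto the same curve there, so no branching occurs; branching genuinely occurs only at a $z^j$, where $0\in\BR(z^j)-z^j$ allows the orbit to rest while other selections allow it to depart. Hence any initial condition from which two distinct FP processes emanate lies in the set of initial conditions from which some orbit reaches some $z^j$, and the null-set estimate of Step 2 shows this set is $\calL^\kappa$-null. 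For every initial condition outside it, all FP orbits avoid the mixed equilibria for all time, and a uniqueness argument of Filippov type on the region away from the indifference sets (where the relevant best-response selection is locally Lipschitz), together with the transversality observation, yields a single FP process.

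\emph{Main obstacle.} The crux is Step 2: classical stable-manifold theory is unavailable because FP is a differential inclusion with non-unique solutions precisely near regular mixed equilibria. Getting around this requires combining the saddle structure of $U$ at each $z^j$ (ruling out convergence ``from above''), the finite-time-convergence property of \cite{swenson2017FP_pot_games} (reducing the question to finitely many flow-box pullbacks), and a hands-on local description of the FP inclusion showing that its local stable set is a finite union of submanifolds of dimension less than $\kappa$.
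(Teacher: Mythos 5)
There is a genuine gap. First, note what the paper actually does here: it does not prove Theorem~\ref{thrm_main_result_FP_pure_strategy} at all, but imports part (i) as Theorem~1 and part (ii) as Remark~20 of \cite{swenson2017FP_pot_games}. So you are attempting to reprove a cited result, and your Step~1 (the Lyapunov/LaSalle argument plus isolatedness of regular equilibria, giving convergence of every FP orbit to a single equilibrium) is standard and fine. The problem is that everything your argument really needs is concentrated in the steps you flag but do not carry out. In Step~2, the claim that the local ``stable set'' of a regular mixed equilibrium $z^j$ lies in a finite union of $C^1$ submanifolds of dimension $<\kappa$ is asserted, not proved; it is exactly the substantive content of the cited paper. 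Two specific obstructions: (a) regularity of a mixed equilibrium is a nondegeneracy condition on the equilibrium map (equivalently on $U$ restricted to the support, with the indifference constraints), and it does not by itself make $z^j$ a nondegenerate critical point of $U$ on $X$, especially when $z^j$ is only partially mixed and sits on a face of $X$; (b) the ``pull back along the dynamics'' step has no flow map to pull back along, since $B_j$ is defined via \emph{all} solutions of a differential inclusion whose solutions are non-unique and non-reversible, and the proposed ``locally Lipschitz best-response selection'' does not exist near the indifference sets that any orbit reaching a mixed equilibrium must traverse. Saying that Lipschitz flow boxes preserve null sets is correct but inapplicable until you have shown that every solution reaching $z^j$ can be captured by finitely many such Lipschitz pieces, which is precisely the hard part.

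Step~3 has the same character: the assertion that away from the mixed equilibria every point of the multivaluedness locus is crossed transversally and that ``all admissible selections push the orbit onto the same curve'' is the content of Remark~20 of \cite{swenson2017FP_pot_games}, not an observation one can simply check. In general, best-response inclusions can slide along, or branch on, indifference sets that are not equilibria, so uniqueness for a.e.\ initial condition requires an argument showing that in a \emph{regular potential} game such degenerate crossings occur only on a $\calL^\kappa$-null set of initial conditions. As written, your proposal is a sensible roadmap that mirrors the strategy of the cited work, but the two claims on which both parts (i) and (ii) rest — the lower-dimensional structure of the stable sets of regular mixed equilibria, and the absence of branching away from them — are stated rather than established, so the proof is not complete. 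If your intent is to match the paper, the honest short proof is the paper's own: cite \cite{swenson2017FP_pot_games}; if your intent is a self-contained proof, Steps~2 and~3 each need a full local analysis of the inclusion near indifference sets and near regular mixed equilibria.
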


The first item is a restatement of Theorem 1 in \cite{swenson2017FP_pot_games}. The second item follows from Remark 20 in \cite{swenson2017FP_pot_games}.
These properties will play a critical role in the proof of Theorem \ref{thrm_FP_conv_rate1}.


\section{Main Result} \label{sec_main_result}
\begin{theorem} \label{thrm_FP_conv_rate1}
Let $\Gamma$ be a regular potential game. Then for almost every initial condition $x_0\in X$, there exists a constant $c=c(\Gamma,x_0)$ such that if $\vx$ is an FP process associated with $\Gamma$ and $\vx(0) = x_0$, then
\begin{equation} \label{eq_FP_conv_rate1}
d(\vx(t),NE) \leq ce^{- t}.
\end{equation}
\end{theorem}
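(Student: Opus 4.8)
The plan is to exploit the two results available from \cite{swenson2017FP_pot_games}: from almost every initial condition, an FP process $\vx$ converges to some pure-strategy equilibrium $x^*$, and the process is unique. Fix such an $x_0$ and its FP process $\vx$, with $\vx(t)\to x^*$. The strategy is to show that once $\vx(t)$ enters the neighborhood $V$ of $x^*$ guaranteed by Lemma~\ref{lemma_strictness}, the distance to $x^*$ decays like $e^{-t}$, and then to absorb the (finite) time spent before entering $V$ into the constant $c$.

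\emph{Step 1: Local contraction.} Let $V$ be a neighborhood of $x^*$ on which $\BR(x)=\{x^*\}$ for all $x\in V$ (Lemma~\ref{lemma_strictness}). Since $\vx(t)\to x^*$, there is a time $t_0=t_0(x_0)$ with $\vx(t)\in V$ for all $t\ge t_0$. For almost every $t\ge t_0$ the FP differential inclusion \eqref{def_FP_autonomous} reduces to the genuine ODE $\dot\vx(t) = x^* - \vx(t)$, because the right-hand side $\BR(\vx(t))-\vx(t)$ is the single point $x^*-\vx(t)$. Writing $\ve(t) := \vx(t)-x^*$, this says $\dot\ve(t) = -\ve(t)$ for a.e.\ $t\ge t_0$, and since $\vx$ (hence $\ve$) is absolutely continuous, integrating gives $\ve(t) = e^{-(t-t_0)}\ve(t_0)$ for all $t\ge t_0$. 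Hence $\|\vx(t)-x^*\| = e^{t_0}\|\vx(t_0)-x^*\|\,e^{-t}$ for all $t\ge t_0$.

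\emph{Step 2: Passing from $d(\cdot,x^*)$ to $d(\cdot,NE)$ and handling $t<t_0$.} Since $x^*\in NE$, we have $d(\vx(t),NE)\le \|\vx(t)-x^*\|$, so for $t\ge t_0$ the bound \eqref{eq_FP_conv_rate1} holds with $c_1 := e^{t_0}\|\vx(t_0)-x^*\|$. For $t\in[0,t_0]$, the set $X$ is compact, so $d(\vx(t),NE)\le\diam(X)=:D$; taking $c_2 := D e^{t_0}$ gives $d(\vx(t),NE)\le D \le c_2 e^{-t}$ on $[0,t_0]$. Setting $c := \max\{c_1,c_2\} = c(\Gamma,x_0)$ yields \eqref{eq_FP_conv_rate1} for all $t\ge 0$. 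The constant depends only on $\Gamma$ and $x_0$ (through $t_0$, $x^*$, and $\vx(t_0)$), and is well-defined precisely because the FP process through $x_0$ is unique.

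\emph{Main obstacle.} The delicate point is Step~1: verifying that on $V$ the differential inclusion collapses to a single-valued linear ODE, and that ``$\dot\ve=-\ve$ a.e.\ plus absolute continuity'' legitimately integrates to the exponential. The first part is exactly Lemma~\ref{lemma_strictness} (this is why regularity of the potential game is essential — in a general potential game $\BR(x)$ near a mixed equilibrium is not single-valued and the process can stall). The second is a standard Carathéodory/Grönwall argument: for an absolutely continuous $\ve$ with $\dot\ve(t)=-\ve(t)$ a.e., the function $t\mapsto e^{t}\ve(t)$ is absolutely continuous with a.e.-zero derivative, hence constant. One should also note that we only need $\vx(t)\to x^*$ for \emph{some} pure equilibrium $x^*$, which holds for a.e.\ $x_0$ by Theorem~\ref{thrm_main_result_FP_pure_strategy}(i), and uniqueness of $\vx$ by Theorem~\ref{thrm_main_result_FP_pure_strategy}(ii); the exceptional sets from both statements are $\calL^\kappa$-null, so their union is too, and the theorem holds off that union.
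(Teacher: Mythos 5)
Your proposal is correct and follows essentially the same route as the paper: invoke Theorem \ref{thrm_main_result_FP_pure_strategy}(i)--(ii) to get, for a.e.\ $x_0$, a unique FP process converging to a pure equilibrium $x^*$, use Lemma \ref{lemma_strictness} to collapse the inclusion to $\dot\vx = x^*-\vx$ after some finite time, integrate to get exponential decay, and absorb the initial time segment into the constant $c(\Gamma,x_0)$. The only differences are cosmetic (the paper normalizes $x^*=0$ and takes $c=\sup_{t\in[0,\tau]}\|\vx(t)\|e^{\tau}$ rather than your max with $\diam(X)$), and you spell out the a.e.-derivative integration step the paper leaves implicit.
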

We note that the constant $c$ in the above theorem is uniquely determined by the game $\Gamma$ and the initial condition $x_0$.
We now prove Theorem \ref{thrm_FP_conv_rate1}.
\begin{proof}
Properties (i) and (ii) of Theorem \ref{thrm_main_result_FP_pure_strategy} imply that
there exists a set $\Omega \subset X$ satisfying the following properties: (a) $\calL^\kappa(X\backslash \Omega)=0$, (b) for every FP process $\vx$ with initial condition $x_0\in\Omega$, $\vx$ is the unique FP process satisfying $\vx(0) = x_0$, and $\vx$ converges to a pure-strategy NE.

Let $x_0\in \Omega$, let $\vx$ be an FP process with $\vx(0) = x_0$, and let $x^*$ be the pure-strategy NE to which $\vx$ converges.
Without loss of generality, assume that the pure-strategy set $Y$ is reordered so that
\begin{equation}\label{eq_x_is_zero}
x^*=0
\end{equation}
(i.e., $T_i^1(x^*_i) = 1$ for all $i=1,\ldots,N$, where $T_i^k$ is defined following \eqref{def_T}).

By Lemma \ref{lemma_strictness}, for all $x$ in a neighborhood of $x^*$ we have $\BR(x) = x^*$. Since $\vx(t) \rightarrow x^*$, this, along with \eqref{def_FP_autonomous} and \eqref{eq_x_is_zero}, implies that
there exists a time $\tau=\tau(\Gamma,x_0)>0$ such that for all $t\geq \tau$,  we have
$\dot{\vx}(t) = - \vx(t)$.
Hence, for $t\geq \tau$ we have $\|\vx(t)\| = \|\vx(\tau)\|e^{\tau-t}$.
Letting $c := \sup_{t\in [0,\tau]} \|\vx(t)\|e^{\tau}$ 
we get $\|\vx(t)\| \leq ce^{-t}$ for all $t\geq 0$.

\end{proof}

\begin{remark} \label{remark_convergence_conjecture}
Conjecture 25 of \cite{harris1998rate} posited that within the class of weighted potential games, the rate of convergence of any FP process $\vx$ is exponential with coefficient $c$ (cf. \eqref{eq_FP_conv_rate1}) depending on the game $\Gamma$ and the particular FP process $\vx$; i.e., $c=c(\Gamma,\vx)$. In Theorem \ref{thrm_FP_conv_rate1} we showed this is true in almost every exact potential game and for almost every initial condition, and furthermore, we showed that, for almost every initial condition, the constant $c$ in \eqref{eq_FP_conv_rate1} can be determined by the initial condition alone, rather than depending on the full path $\vx$; i.e., $c=c(\Gamma,x_0)$.
\end{remark}

\section{Conclusions} \label{sec_conclusions}
The paper studied fictitious play learning dynamics in continuous time. It was shown that in almost every potential game (i.e., in every regular potential game \cite{swenson2017regular}) the rate of convergence of FP is generically exponential. The proof was facilitated by the fact that FP has been shown to converge generically to pure strategy equilibria in regular potential games \cite{swenson2017FP_pot_games}.



\bibliographystyle{IEEEtran}
\bibliography{myRefs}

\begin{thebibliography}{10}
\providecommand{\url}[1]{#1}
\csname url@samestyle\endcsname
\providecommand{\newblock}{\relax}
\providecommand{\bibinfo}[2]{#2}
\providecommand{\BIBentrySTDinterwordspacing}{\spaceskip=0pt\relax}
\providecommand{\BIBentryALTinterwordstretchfactor}{4}
\providecommand{\BIBentryALTinterwordspacing}{\spaceskip=\fontdimen2\font plus
\BIBentryALTinterwordstretchfactor\fontdimen3\font minus
  \fontdimen4\font\relax}
\providecommand{\BIBforeignlanguage}[2]{{%
\expandafter\ifx\csname l@#1\endcsname\relax
\typeout{** WARNING: IEEEtran.bst: No hyphenation pattern has been}%
\typeout{** loaded for the language `#1'. Using the pattern for}%
\typeout{** the default language instead.}%
\else
\language=\csname l@#1\endcsname
\fi
#2}}
\providecommand{\BIBdecl}{\relax}
\BIBdecl

\bibitem{fudenberg1991game}
D.~Fudenberg and J.~Tirole, \emph{Game Theory}, 1991, vol. 393.

\bibitem{marden2012game}
J.~R. Marden and J.~Shamma, ``Game theory and distributed control,''
  \emph{Handbook of Game Theory}, vol.~4, pp. 861--899, 2012.

\bibitem{lasaulce2011game}
S.~Lasaulce and H.~Tembine, \emph{Game theory and learning for wireless
  networks: fundamentals and applications}.\hskip 1em plus 0.5em minus
  0.4em\relax Academic Press, 2011.

\bibitem{rose2011learning}
L.~Rose, S.~Lasaulce, S.~M. Perlaza, and M.~Debbah, ``Learning equilibria with
  partial information in decentralized wireless networks,'' \emph{IEEE
  communications Magazine}, vol.~49, no.~8, 2011.

\bibitem{saad2012game}
W.~Saad, Z.~Han, H.~V. Poor, and T.~Basar, ``Game-theoretic methods for the
  smart grid: An overview of microgrid systems, demand-side management, and
  smart grid communications,'' \emph{Signal Process. Mag., IEEE}, vol.~29,
  no.~5, pp. 86--105, 2012.

\bibitem{marden06}
J.~R. Marden, G.~Arslan, and J.~S. Shamma, ``Joint strategy fictitious play
  with inertia for potential games,'' \emph{IEEE Trans. Automat. Contr.},
  vol.~54, no.~2, pp. 208--220, 2009.

\bibitem{Lambert01}
T.~J. Lambert, M.~A. Epelman, and R.~L. Smith, ``A fictitious play approach to
  large-scale optimization,'' \emph{Operations Research}, vol.~53, no.~3, pp.
  477--489, 2005.

\bibitem{swenson2015CESFP}
B.~Swenson, S.~Kar, and J.~Xavier, ``Single sample fictitious play,'' 2017,
  accepted for pulblication in IEEE Trans. on Automat. Contr., online:
  http://ieeexplore.ieee.org/document/7935409/.

\bibitem{ma2013decentralized}
Z.~Ma, D.~S. Callaway, and I.~A. Hiskens, ``Decentralized charging control of
  large populations of plug-in electric vehicles,'' \emph{IEEE Transactions on
  Control Systems Technology}, vol.~21, no.~1, pp. 67--78, 2013.

\bibitem{zhu2013distributed}
M.~Zhu and S.~Mart{\'\i}nez, ``Distributed coverage games for energy-aware
  mobile sensor networks,'' \emph{SIAM Journal on Control and Optimization},
  vol.~51, no.~1, pp. 1--27, 2013.

\bibitem{marden2013model}
J.~R. Marden, S.~D. Ruben, and L.~Y. Pao, ``A model-free approach to wind farm
  control using game theoretic methods,'' \emph{IEEE Transactions on Control
  Systems Technology}, vol.~21, no.~4, pp. 1207--1214, 2013.

\bibitem{wu2011demand}
C.~Wu, H.~Mohsenian-Rad, J.~Huang, and A.~Y. Wang, ``Demand side management for
  wind power integration in microgrid using dynamic potential game theory,'' in
  \emph{GLOBECOM Workshops (GC Wkshps)}.\hskip 1em plus 0.5em minus 0.4em\relax
  IEEE, 2011, pp. 1199--1204.

\bibitem{Brown51}
G.~W. Brown, \emph{``Iterative Solutions of Games by Fictitious Play'' In
  Activity Analysis of Production and Allocation}, T.~Coopmans, Ed.\hskip 1em
  plus 0.5em minus 0.4em\relax New York: Wiley, 1951.

\bibitem{fudenberg1998theory}
D.~Fudenberg and D.~K. Levine, \emph{The Theory of Learning in Games}.\hskip
  1em plus 0.5em minus 0.4em\relax MIT press, 1998, vol.~2.

\bibitem{candogan2013dynamics}
O.~Candogan, A.~Ozdaglar, and P.~A. Parrilo, ``Dynamics in near-potential
  games,'' \emph{Games and Econ. Behav.}, vol.~82, pp. 66--90, 2013.

\bibitem{swenson2017FP_pot_games}
B.~Swenson, R.~Murray, and S.~Kar, ``Fictitious play in potential games,''
  2017, submitted for journal publication, online:
  https://arxiv.org/abs/1707.06465.

\bibitem{harris1998rate}
C.~Harris, ``On the rate of convergence of continuous-time fictitious play,''
  \emph{Games and Economic Behavior}, vol.~22, no.~2, pp. 238--259, 1998.

\bibitem{brandt2010rate}
F.~Brandt, F.~Fischer, and P.~Harrenstein, ``On the rate of convergence of
  fictitious play,'' in \emph{International Symposium on Algorithmic Game
  Theory}.\hskip 1em plus 0.5em minus 0.4em\relax Springer, 2010, pp. 102--113.

\bibitem{daskalakis2014counter}
C.~Daskalakis and Q.~Pan, ``{A counter-example to Karlin's strong conjecture
  for fictitious play},'' in \emph{Foundations of Computer Science (FOCS), 55th
  Annual Symposium on}.\hskip 1em plus 0.5em minus 0.4em\relax IEEE, 2014, pp.
  11--20.

\bibitem{Mond96}
D.~Monderer and L.~Shapley, ``Potential games,'' \emph{Games and Econ. Behav.},
  vol.~14, no.~1, pp. 124--143, 1996.

\bibitem{marden2009cooperative}
J.~R. Marden, G.~Arslan, and J.~S. Shamma, ``Cooperative control and potential
  games,'' \emph{IEEE Transactions on Systems, Man, and Cybernetics, Part B
  (Cybernetics)}, vol.~39, no.~6, pp. 1393--1407, 2009.

\bibitem{scutari2006potential}
G.~Scutari, S.~Barbarossa, and D.~P. Palomar, ``Potential games: A framework
  for vector power control problems with coupled constraints,'' in
  \emph{Proceedings of the IEEE International Conference on Acoustics, Speech
  and Signal Processing}, vol.~4.\hskip 1em plus 0.5em minus 0.4em\relax IEEE,
  2006, pp. 241--244.

\bibitem{xu2013decision}
Y.~Xu, A.~Anpalagan, Q.~Wu, L.~Shen, Z.~Gao, and J.~Wang, ``Decision-theoretic
  distributed channel selection for opportunistic spectrum access: Strategies,
  challenges and solutions,'' \emph{IEEE Communications Surveys \& Tutorials},
  vol.~15, no.~4, pp. 1689--1713, 2013.

\bibitem{ding2012collaborative}
C.~Ding, B.~Song, A.~Morye, J.~A. Farrell, and A.~K. Roy-Chowdhury,
  ``Collaborative sensing in a distributed {PTZ} camera network,'' \emph{IEEE
  Transactions on Image Processing}, vol.~21, no.~7, pp. 3282--3295, 2012.

\bibitem{nie2006adaptive}
N.~Nie and C.~Comaniciu, ``Adaptive channel allocation spectrum etiquette for
  cognitive radio networks,'' \emph{Mobile Networks and Applications}, vol.~11,
  no.~6, pp. 779--797, 2006.

\bibitem{chu2012cooperative}
X.~Chu and H.~Sethu, ``Cooperative topology control with adaptation for
  improved lifetime in wireless ad hoc networks,'' in \emph{Proceedings of IEEE
  International Conference on Computer Communications}.\hskip 1em plus 0.5em
  minus 0.4em\relax IEEE, 2012, pp. 262--270.

\bibitem{srivastava2005using}
V.~Srivastava, J.~O. Neel, A.~B. MacKenzie, R.~Menon, L.~A. DaSilva, J.~E.
  Hicks, J.~H. Reed, and R.~P. Gilles, ``Using game theory to analyze wireless
  ad hoc networks,'' \emph{IEEE Communications Surveys and Tutorials}, vol.~7,
  no.~4, pp. 46--56, 2005.

\bibitem{garcia2000fictitious}
A.~Garcia, D.~Reaume, and R.~L. Smith, ``Fictitious play for finding system
  optimal routings in dynamic traffic networks,'' \emph{Transportation Research
  Part B: Methodological}, vol.~34, no.~2, pp. 147--156, 2000.

\bibitem{marden-connections}
J.~R. Marden, G.~Arslan, and J.~S. Shamma, ``Connections between cooperative
  control and potential games,'' \emph{IEEE Transactions on Systems, Man and
  Cybernetics. Part B: Cybernetics}, vol.~39, 2009.

\bibitem{li2013designing}
N.~Li and J.~R. Marden, ``Designing games for distributed optimization,''
  \emph{IEEE Journal of Selected Topics in Signal Processing}, vol.~7, no.~2,
  pp. 230--242, 2013.

\bibitem{leslie2006generalised}
D.~S. Leslie and E.~J. Collins, ``Generalised weakened fictitious play,''
  \emph{Games and Econ. Behav.}, vol.~56, no.~2, pp. 285--298, 2006.

\bibitem{benaim2005stochastic}
M.~Bena{\"\i}m, J.~Hofbauer, and S.~Sorin, ``Stochastic approximations and
  differential inclusions,'' \emph{SIAM J. Control and Optim.}, vol.~44, no.~1,
  pp. 328--348, 2005.

\bibitem{harsanyi1973oddness}
J.~C. Harsanyi, ``Oddness of the number of equilibrium points: a new proof,''
  \emph{International Journal of Game Theory}, vol.~2, no.~1, pp. 235--250,
  1973.

\bibitem{swenson2017regular}
B.~Swenson, R.~Murray, and S.~Kar, ``Regular potential games,'' 2017, submitted
  for journal publication, online: https://arxiv.org/abs/1707.06466.

\bibitem{van1991stability}
E.~Van~Damme, \emph{Stability and perfection of Nash equilibria}.\hskip 1em
  plus 0.5em minus 0.4em\relax Springer, 1991, vol. 339.

\bibitem{selten1975reexamination}
R.~Selten, ``Reexamination of the perfectness concept for equilibrium points in
  extensive games,'' \emph{International Journal of Game Theory}, vol.~4,
  no.~1, pp. 25--55, 1975.

\bibitem{myerson1978refinements}
R.~B. Myerson, ``{Refinements of the Nash equilibrium concept},''
  \emph{International Journal of Game Theory}, vol.~7, no.~2, pp. 73--80, 1978.

\bibitem{kojima1985strongly}
M.~Kojima, A.~Okada, and S.~Shindoh, ``Strongly stable equilibrium points of
  n-person noncooperative games,'' \emph{Mathematics of Operations Research},
  vol.~10, no.~4, pp. 650--663, 1985.

\bibitem{wen1962essential}
W.~Wen-Tsun and J.~Jia-He, ``Essential equilibrium points of $n$-person
  non-cooperative games,'' \emph{Sci Siniea}, vol.~11, p. 1307, 1322.

\bibitem{Mond01}
D.~Monderer and L.~S. Shapley, ``Fictitious play property for games with
  identical interests,'' \emph{Journal of Economic Theory}, vol.~68, no.~1, pp.
  258--265, 1996.

\end{thebibliography}
\end{document}